\pdfoutput=1
\documentclass[twoside]{article}
\usepackage{ukai}

\usepackage{amsthm}
\usepackage{booktabs}
\usepackage{tikz}
\usepackage{pgfplots}
\usetikzlibrary{arrows.meta, positioning, fit, backgrounds, calc, shadows.blur}
\pgfplotsset{compat=1.16}
\usepackage{microtype}
\usepackage{colortbl}
\usepackage[most]{tcolorbox}

\definecolor{ink}{HTML}{2B2F36}
\definecolor{paperbg}{HTML}{FBF8F1}
\definecolor{plotaccent}{HTML}{7A5C2E}
\definecolor{cardline}{HTML}{B7BECA}
\definecolor{cSafe}{HTML}{2E7D5B}
\definecolor{cDanger}{HTML}{B23A48}
\definecolor{cProc}{HTML}{3F6CA6}
\definecolor{cWarn}{HTML}{A9781C}

\renewcommand{\confyear}{2026}

\titlespacing*{\section}{0pt}{2.2ex plus .3ex minus .2ex}{1.1ex plus .2ex}
\newtheoremstyle{propstyle}{6pt}{6pt}{\itshape}{0pt}{\bfseries\color{cSafe}}{.}{.6em}{}
\newtheoremstyle{defnstyle}{6pt}{6pt}{\normalfont}{0pt}{\bfseries\color{cProc}}{.}{.6em}{}
\theoremstyle{propstyle}
\newtheorem{theorem}{Theorem}
\newtheorem{proposition}{Proposition}
\theoremstyle{defnstyle}
\newtheorem{definition}{Definition}
\tcolorboxenvironment{definition}{blanker, breakable, before skip=6pt, after skip=6pt,
  left=9pt, right=1pt, top=1pt, bottom=1pt, borderline west={2pt}{0pt}{cProc!85}}
\tcolorboxenvironment{proposition}{blanker, breakable, before skip=6pt, after skip=6pt,
  left=9pt, right=1pt, top=1pt, bottom=1pt, borderline west={2pt}{0pt}{cSafe!85}}

\hypersetup{hidelinks,
  pdftitle={Certified Multi-Source Integrity for Structured Agent Actions},
  pdfauthor={Anmol Pandey, Aditya Jain, Liang Chen, Carsten Maple, Christo Panchev},
  pdfkeywords={LLM agents, prompt injection, trustworthy AI, certified robustness, AI systems security},
  pdfsubject={UK AI Conference 2026}}
\newcommand{\exec}{\textsc{Execute}}
\newcommand{\abst}{\textsc{Abstain}}

\title{\customtitle{Certified Multi-Source Integrity for Structured Agent Actions}}

\setauthorsshort{Pandey et al.}

\author{
    Anmol Pandey$^1$, Aditya Jain$^1$, Liang Chen$^2$, Carsten Maple$^1$, Christo Panchev$^1$ \\[2pt]
    {$^1$University of Warwick, $^2$University of Hertfordshire} \\[1pt]
    {\small \texttt{Anmol.Pandey@warwick.ac.uk, aditya.jain.5@warwick.ac.uk, l.chen26@herts.ac.uk}} \\[1pt]
    {\small \texttt{cm@warwick.ac.uk, christo.panchev@warwick.ac.uk}} \\[2pt]
    {\small Code and data: \url{https://github.com/anmolpandey299/certified-multisource-integrity}}
}

\date{}

\usepackage{float}
\begin{document}
\maketitle

\begin{abstract}
LLM agents increasingly take privileged, often irreversible structured actions, such as paying an invoice. They assemble each action from action-critical fields in documents and tool outputs
that an adversary can corrupt, and indirect prompt injection can drive the model itself to extract
attacker-chosen values. Current defenses gate on a source's trust label or certify free-text answer quality. None certifies the integrity of a coupled, policy-bound structured action under a
corruption budget that accounts for shared upstream sources. We characterize when such an action is safely
certifiable and give the maximally live safe certifier. It admits an action only when each field clears the
rule its evidence structure supports: a bounded corruption radius over \emph{corruption-distinct} evidence
classes, counted by a minimum hitting set so that re-publishing or laundered copies cannot manufacture
a quorum, deterministic reconciliation for complementary fields, and a trusted anchor where the evidence leaves
a field single-sourced. We formalize two robustness notions, validate each mechanism by ablation, and measure
how often the multi-source precondition holds on sanctions designations ($70{,}966$ entities) and software
supply-chain provenance ($450$ packages). Under upper-bound proxies, genuine corroboration is a minority
phenomenon in both, and naive attestation counting overstates it, since witnesses that look independent
collapse to two corruption-distinct domains once shared origin is counted. Across five current models in a
real agent loop, a realistic injection fools every model but one and a naive agent then executes the
fraudulent action on most attacks. The certifier admits no unsafe action and recovers the correct value where
corroboration permits, while action-gating and provenance baselines are broken in every world of our harness by some attack in its
space.
\end{abstract}

\keywords{LLM agents, prompt injection, trustworthy AI, certified robustness, AI systems security}

\section{Introduction}\label{sec:intro}

Large language model (LLM) agents are increasingly trusted to take privileged structured actions on a
user's behalf, such as paying an invoice, screening a counterparty against sanctions, or filing a record.
For a payment the agent must settle on a payee, an amount, and an account, each read from documents and
tool outputs. In realistic deployments this evidence is
multi-source and adversarially corruptible: a purchase order, an invoice, a bank verification response, and a
sanctions feed originate from different parties, and an attacker may influence any of them. Indirect
prompt injection makes this sharper, since a tampered source can induce the model itself to extract an
attacker-chosen value \cite{agentdojo}, so the agent proposes an action that is wrong yet well-formed. Since
these actions are often irreversible, one fooled extraction can send a fraudulent payment.

Existing defenses do not address this directly. Information-flow systems and policy monitors gate an action
on the trust label or provenance of its inputs. That answers whether data may flow to an action, not whether
the value in that data is correct, so a decisive value from an untrusted source forces a blanket refusal.
Certified-robustness work such as RobustRAG~\cite{robustrag} bounds the quality of a free-text answer against
injected passages. It certifies an answer rather than an action, and it counts passages, so duplicated copies
inflate the budget. None of these certifies the integrity of a structured action under a corruption budget
that accounts for shared upstream sources, and none measures whether the corroboration such a certificate
needs is present in real evidence. Section~\ref{sec:related} places the present work against these systems.

We study certified multi-source integrity for structured agent actions. The certifier reads each source in
isolation and admits an action only when every field independently clears the rule that suits its evidence
structure. At the center is a bounded corruption radius over corruption-distinct evidence classes. Rather
than counting raw attestations, the certifier counts the smallest number of independent control domains an
attacker must corrupt to erase a fact, a minimum hitting set, so that authorities that merely republish or
transpose one another are not mistaken for independent witnesses. A claim-atomic discipline then counts
votes once per authenticated control domain, which prevents an attacker from laundering copies or issuing several records to inflate a single
corrupted source into an apparent quorum. Fields that lack redundant attestation are handled honestly
rather than forced into a vote. A complementary field such as an invoiced amount is checked against an
authorized ceiling by deterministic reconciliation, and a single-sourced field such as the payee account
is gated by a trusted anchor (Figure~\ref{fig:arch}). We keep two robustness notions apart. Exact-value tolerance asks the certificate to survive the adversary
intact, safety with abstention only that it never be wrong, and conflating them overstates the guarantee.
Section~\ref{sec:defs} states both precisely.

\begin{figure}[t]
\centering
\resizebox{0.76\linewidth}{!}{%
\begin{tikzpicture}[
  font=\small, >={Latex[length=2.4mm]},
  card/.style={draw=cardline, line width=0.8pt, rounded corners=3pt, fill=white, align=left,
    inner sep=6pt, text=ink},
  cardc/.style={draw=cardline, line width=0.8pt, rounded corners=3pt, fill=white, align=center,
    inner sep=6pt, text=ink},
  ar/.style={-{Latex[length=2.6mm]}, line width=1pt, ink!72},
  badge/.style={circle, fill=#1, text=white, font=\bfseries\scriptsize, inner sep=0pt, minimum size=5mm}
]
  % Card 1: sources, one tampered (red)
  \node[card, text width=39mm] (src) {\textbf{Untrusted evidence}\\[4pt]
    \scriptsize purchase order \textcolor{ink!55}{\itshape buyer}\\
    \scriptsize invoice \textcolor{cDanger}{\itshape seller, tampered}\\
    \scriptsize payee confirmation \textcolor{ink!55}{\itshape bank}\\
    \scriptsize sanctions feed \textcolor{ink!55}{\itshape authority}};
  % Card 2
  \node[cardc, text width=29mm, right=13mm of src] (ext) {\textbf{Isolated}\\\textbf{extraction}\\[4pt]
    \scriptsize one model call\\ per source};
  % Card 3: per-field rules, scannable
  \node[card, text width=45mm, right=14mm of ext] (cert) {\textbf{Per-field certification}\\[4pt]
    \scriptsize \textbf{payee}\ \ agreement\,/\,threshold\\
    \scriptsize \textbf{amount}\ \ reconciliation\\
    \scriptsize \textbf{account}\ \ trusted anchor\\[3pt]
    \scriptsize\textit{counted over corruption-distinct domains}};
  % Card 4: color-coded decision
  \node[cardc, text width=27mm, right=13mm of cert] (dec) {\textbf{Decision}\\[4pt]
    \textcolor{cSafe!42!black}{\scriptsize\exec$(a)$}\\[1pt] \scriptsize or\\[1pt]
    \textcolor{cWarn!55!black}{\scriptsize\abst\ $(\bot)$}};
  % left accent bars
  \foreach \c/\col in {src/cWarn, ext/cProc, cert/cSafe, dec/ink}{%
    \draw[\col, line width=3pt, line cap=round]
      ([xshift=1.6pt,yshift=-3pt]\c.north west) -- ([xshift=1.6pt,yshift=3pt]\c.south west);}
  % numbered stage badges
  \node[badge=cWarn] at (src.north west) {1};
  \node[badge=cProc] at (ext.north west) {2};
  \node[badge=cSafe] at (cert.north west) {3};
  \node[badge=ink]   at (dec.north west) {4};
  % flow
  \draw[ar] (src) -- (ext);
  \draw[ar] (ext) -- (cert);
  \draw[ar] (cert) -- (dec);
  % trust boundary between extraction and certification
  \coordinate (tbx) at ($(ext.east)!0.5!(cert.west)$);
  \draw[cDanger!58, dashed, line width=0.9pt]
    ($(tbx|-cert.north)+(0,6mm)$) -- ($(tbx|-cert.south)-(0,5mm)$);
  \node[font=\scriptsize\itshape, text=cDanger!72, anchor=south] at ($(tbx|-cert.north)+(0,6mm)$)
    {trust boundary};
  % adversary
  \node[cardc, draw=cDanger!65, fill=cDanger!4, text=cDanger, text width=33mm, font=\scriptsize,
    above=7mm of src] (adv) {\textbf{Adversary}\ corrupts $\le k$ control domains};
  \draw[cDanger, line width=3pt, line cap=round]
    ([xshift=1.6pt,yshift=-3pt]adv.north west) -- ([xshift=1.6pt,yshift=3pt]adv.south west);
  \draw[-{Latex[length=2.2mm]}, line width=0.9pt, cDanger, dashed] (adv.south) -- (src.north);
\end{tikzpicture}}
\caption{The certifier as an admission gate. Each source is read in isolation, every action-critical field is
decided by the rule its evidence structure supports and counted over corruption-distinct domains, and the
action executes only if every field is admitted, otherwise it abstains. An adversary within budget $k$
corrupts at most $k$ control domains, here the tampered invoice.}
\label{fig:arch}
\end{figure}

\noindent This paper makes four contributions.
\begin{itemize}\setlength{\itemsep}{2pt}\setlength{\topsep}{2pt}\setlength{\parskip}{0pt}
  \item A characterization of when a structured action is safely certifiable, with the corruption-distinct
  count as a minimum hitting set. Execution is sound exactly when the budgeted feasible set is a singleton,
  and the certifier meeting this bound is \emph{maximally live}. The Byzantine safety and exact-value radii,
  and the invariance to laundering, follow as specializations (Section~\ref{sec:defs}).
  \item A controlled ablation showing that each mechanism is load-bearing. Removing the transaction join
  key, the trusted anchor, the mandatory-source rule, or claim-atomic counting each opens exactly one
  attack family, while the full certifier admits none (Section~\ref{sec:mech}).
  \item A measurement of where that precondition actually holds. On $70{,}966$ sanctions designations only
  about $15\%$ carry three or more issuing domains. On $450$ npm packages an apparent third witness
  disappears entirely once its shared origin is counted, and no package reaches three. Driven by the
  witnesses that remain, the certifier catches a single-domain substitution on every corroborated package
  (Section~\ref{sec:scope}).
  \item A real-model end-to-end evaluation over five current models. Only one resisted the injection here, and the
  naive agent then failed on most attacks, yet the certificate held throughout. In this evaluation it outperforms
  faithful capability-gating and policy-monitor defenses on safety and utility
  at once, and pays even the legitimate novel payees that allowlisting must refuse
  (Sections~\ref{sec:mitl} and \ref{sec:baseline}).
\end{itemize}

We are deliberate about scope. Real evidence corroborates only a minority of relations, the account field
rests on a trusted anchor rather than on the radius, and the adversary in the live-model study is
property-based rather than optimizing. The security guarantee is the theorem of Section~\ref{sec:defs}, and the experiments show
the mechanism behaving as designed wherever its preconditions hold.

\section{Threat Model and Definitions}\label{sec:defs}

We model an agent that proposes a privileged structured action and a certifier that decides whether to
execute it. The model is deliberately spare, so that the guarantees rest on stated assumptions rather than
on implementation detail.

\paragraph{System and adversary.}
An agent proposes a privileged structured action $a=(a_1,\dots,a_d)$, a tuple of action-critical fields.
For a business payment these are the payee identity, the payable amount, and the payee account. A certifier
$M$ decides each field independently, and the action executes only when every field is admitted. Each field
is supported by a set of attestations. An attestation is a pair $(v,c)$ in which an evidence class $c$
asserts a value $v$. An evidence class is an origin of attestation with its own corruption mode, for
instance a buyer document, a seller document, a bank verification response, or a designation authority.
Every attestation is bound, by an unforgeable origin identifier, to the control domain that produced it, and
carries a provenance root. Independent records carry distinct roots, and a copy inherits the root of its
source, which is what later lets the certifier tell a genuine second opinion from a laundered duplicate.

The adversary selects a set $C\subseteq U$ of control domains subject to a corruption budget $|C|\le k$,
where $U$ is the universe of control domains. Corrupting $C$ affects every attestation whose dependency set
meets $C$. Within its reach the adversary may set values arbitrarily and emit arbitrarily many records,
including several distinct original records as well as laundered copies attributed to roots it controls.
Elsewhere attestations are honest and report the field's true value $f(x)$ after deterministic
canonicalization. The adversary cannot forge an attestation bound to the origin identifier of a control
domain it does not control, an assumption we make explicit below. The adversary knows the
certifier, the corpus, and the policy, and may aim either to make the certifier execute a value other than
the truth or to force it to abstain.

\paragraph{Trusted obligation skeleton.}
The certifier protects values within a fixed object, not the choice of object. From trusted user intent
alone, before any untrusted evidence is read, we fix an obligation skeleton $\Gamma$ pinning down the
operation, transaction key, mandatory relations, eligible sources, allowed destinations, amount cap, and
policy version. Untrusted evidence may only fill the values $\Gamma$ leaves open. It may not redefine the operation, the
transaction the action settles, or which sources are eligible, and an action is admissible only if it
instantiates $\Gamma$. This closes a selection attack that field certification alone
leaves open. A poisoned source steering the agent toward a different transaction yields a certificate sound
for the wrong object, rejected at the skeleton before any vote is counted.

\paragraph{Corruption-distinctness.}
Apparent independence is not independence under corruption. An authority that republishes or transposes
another shares its corruption mode, so two lists that copy a single upstream are one source for an attack.
The concrete case is a software package, whose build provenance and repository back-reference look like two
witnesses to its source but are both GitHub artifacts, so one compromise forges both and the apparent pair is
one domain (Figure~\ref{fig:supply}). We therefore measure redundancy by the number of independent corruptions needed to
erase the evidence, not by the attestations on display.

\begin{definition}[Dependency set]
For an attestation $att$, let $D(att)$ be the set of control domains for which corrupting any single member
removes $att$. An authority that designates an entity on its own has $D(att)=\{\text{that authority}\}$. A
national listing that merely transposes a United Nations designation has
$D(att)=\{\text{the nation},\ \text{the UN}\}$, because corrupting either erases it.
\end{definition}

\begin{definition}[Corruption-distinct count]\label{def:cdcount}
For a fact attested by $\{att_i\}$, the corruption-distinct count is the size of the smallest set of control
domains that meets every attestation's dependency set,
\begin{equation}
  m \;=\; \min\bigl\{\,|X| : X\cap D(att_i)\neq\emptyset \ \text{for all } i \,\bigr\}.
\end{equation}
Equivalently, $m$ is the fewest control domains an adversary must corrupt to remove every attestation of the
fact. It is a minimum hitting set, and it reduces to minimum vertex cover when each dependency set has two
members. For a threshold predicate such as ``this entity is designated'', the presence-removal radius is
$m-1$, since the adversary needs $m$ corruptions to flip the predicate and one honest attestation survives
any $m-1$ of them.
\end{definition}

\noindent Section~\ref{sec:scope} measures an upper-bound proxy for $m$ on real data, since exact
dependency sets are not yet available.

\begin{definition}[Budgeted influence]
For a budget of $k$ control domains, let the affected set of $X\subseteq U$ be
$A(X)=\{i:D(att_i)\cap X\ne\emptyset\}$, and let $b_k=\max_{|X|\le k}|A(X)|$ be the largest number of
attestations any $k$ domains can alter. When the attestations of a fact are corruption-distinct their
dependency sets are disjoint and $b_k=k$, so one corruption moves one vote. When domains are shared a single
corruption can move several votes and $b_k>k$. The propositions below are stated for corruption-distinct
classes, where $b_k=k$. In the general case the same arguments hold with $k$ replaced by $b_k$, so the
exact-value condition becomes $N>2b_k$ and the safety radius is $m-1$.
\end{definition}

\paragraph{Certifier rules.}
The certifier maps the observed attestations for a field to either $\exec(v)$ or $\bot$, an abstention that
escalates to a human. Different field structures call for different rules, and forcing every field through a
single rule is one of the mistakes the design avoids. \emph{Agreement or escalate} governs multi-source
identity fields: execute $v$ when a mandatory quorum of corruption-distinct classes is present and all of
them agree on $v$ after canonicalization, and abstain otherwise. \emph{Unique threshold by count} governs
multi-source value fields: execute a value when at most $k$ corruption-distinct classes dissent from it,
equivalently when its support is at least $N-k$, \emph{and it is the only value to clear that threshold}, which we call unique-threshold admission.
Uniqueness is load-bearing: at $N\le 2k$ an adversary can make a challenger reach $N-k$ as well, so two
values are feasible and admitting the larger count would certify a wrong one. \emph{Reconciliation}
governs complementary fields by a deterministic constraint, for example the invoiced amount lying within the
authorized order plus tolerance, a check rather than a vote. Reconciliation bounds the payment by what the
buyer authorized rather than certifying the invoiced figure, so that field carries a bound, not a radius. \emph{The anchor gate} governs fields the
evidence flow leaves single-sourced, such as the payee account: execute when the proposed account both
matches the payee name at the bank and appears on the onboarding allowlist, and abstain otherwise. Safety
here rests on trusted anchors rather than a corruption radius, a distinction we preserve.

\paragraph{Two robustness notions.}
We separate two guarantees that are easy to conflate and whose conflation flatters the result.
\begin{definition}[Exact-value tolerance]
$M$ is $k$-exact-value-tolerant when $M_C(x)=f(x)$ for every $|C|\le k$, so the certified value is unchanged
by any corruption within budget.
\end{definition}
\begin{definition}[Safety with abstention]
$M$ is $k$-safe-with-abstention when $M_C(x)\in\{f(x),\bot\}$ for every $|C|\le k$, so the certifier returns
the correct value or refuses, and never a wrong one.
\end{definition}
\noindent The second is weaker, giving up availability under attack, but reachable with far less redundancy,
which matters since real multi-source coverage is scarce (Section~\ref{sec:scope}).

\paragraph{When safe execution is possible.}
Fix the observed attestations, with values $x_i$, and the trusted skeleton $\Gamma$ of join key, mandatory
sources, amount cap, and destination constraints, all settled before untrusted evidence is read. Write
$\mathrm{Dis}_j(a)=\{i:x_i\neq a_j\}$ for the attestations of field $j$ that disagree with the value $a_j$.

\begin{definition}[Feasible action]\label{def:feasible}
An action $a$ is $k$-feasible given the observation if it satisfies $\Gamma$ and one corruption set of at most
$k$ domains explains every disagreeing attestation across all fields at once, that is
$\tau\bigl(\bigcup_j \mathrm{Dis}_j(a)\bigr)\le k$ for the minimum hitting set $\tau$ over those attestations'
dependency sets. Let $F_k$ be the set of $k$-feasible actions.
\end{definition}

\begin{theorem}[Safe execution is exactly singleton feasibility]\label{thm:feasible}
A certifier that never executes a wrong action under any $|C|\le k$ may execute $a$ if and only if
$F_k=\{a\}$, and when $|F_k|\neq1$ every $k$-safe certifier must abstain. The certifier that executes the
unique feasible action and abstains otherwise is $k$-safe and \emph{maximally live}: its abstention set lies
inside that of every $k$-safe certifier.
\end{theorem}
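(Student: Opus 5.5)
The argument is an indistinguishability argument over worlds consistent with the observation. The certifier sees only the observation $O$, the attestations and their values, together with the trusted skeleton $\Gamma$. Its output is therefore a function of $(O,\Gamma)$ alone. The first step is to establish a two-way link between $k$-feasible actions and worlds that could have produced $O$.

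\emph{Feasible actions are explanations.} I will show that $a\in F_k$ if and only if there is a world in which $a$ is the true action and some corruption set $C$ with $|C|\le k$ produces exactly $O$. For the forward direction, suppose $a\in F_k$ and take a minimum hitting set $C$ of the dependency sets of $\bigcup_j\mathrm{Dis}_j(a)$, so $|C|\le k$. Declare $a$ the truth, with $f(x)=a$, and let the adversary corrupt $C$. Every disagreeing attestation meets $C$, so the adversary may set its value to the observed $x_i$; the threat model lets it set values arbitrarily within its reach. Attestations outside the reach of $C$ agree with $a$ and are honest. Attestations inside the reach that happen to agree are also consistent, since the adversary may simply report the truth. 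The world satisfies $\Gamma$ because $a$ does.

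For the converse, suppose a world with truth $a$ and $|C|\le k$ yields $O$. Honest attestations report $a_j$, so every attestation in $\mathrm{Dis}_j(a)$ must have $D(att)\cap C\neq\emptyset$. Then $C$ hits all of them, which gives $\tau\le k$. Admissibility forces $a$ to instantiate $\Gamma$, so $a\in F_k$.

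\emph{Necessity.} Suppose a $k$-safe certifier executes $a$ on $O$. Executing is itself an admission, so $a$ satisfies $\Gamma$. The true action in the generating world is feasible by the converse above, and safety forces it to equal $a$, so $a\in F_k$. If some $a'\neq a$ were also in $F_k$, the forward direction gives a world with truth $a'$ and the same $O$. The certifier's output is unchanged, so it executes $a\neq a'$, which contradicts safety. Hence $F_k=\{a\}$. This also gives the abstention claim. If $|F_k|\ge 2$, executing any $a$ fails as just shown. If $F_k=\emptyset$, executing any $a$ contradicts $a\in F_k$, and then no valid within-budget world produces $O$ at all, so abstaining is vacuously safe.

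\emph{Sufficiency and maximal liveness.} Let $M^\ast$ execute the unique element when $F_k$ is a singleton and abstain otherwise. Under any real world with $|C|\le k$, the truth lies in $F_k$ by the converse. If $M^\ast$ executes, the singleton must be the truth, so $M^\ast$ is $k$-safe. For maximal liveness, if $M^\ast$ abstains then $|F_k|\neq 1$, and necessity shows every $k$-safe certifier also abstains on that observation. So the abstention set of $M^\ast$ is contained in that of every $k$-safe certifier.

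The main obstacle is the forward construction. It has to be legitimate under the threat model in two respects. First, corrupting $C$ affects all attestations whose dependency sets meet $C$, not only the disagreeing ones, and I must check that the adversary can leave those with their honest values. Second, the observed record multiplicities and provenance roots must be reproducible. Here I would lean on the model's allowance of arbitrary records and laundered copies within reach, and on the fact that roots outside reach are honest. A further point needs care: $\tau$ is defined over the dependency sets in $O$ itself, so I will state explicitly that the dependency structure is part of the observation and common to both worlds.
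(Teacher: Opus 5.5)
Your proof is correct and follows the paper's own indistinguishability argument. Two distinct feasible actions yield two within-budget worlds with one observation, and a unique feasible action is the truth in every consistent world; your explicit two-way lemma between $k$-feasible actions and explaining worlds makes precise what the paper's soundness and completeness steps assume. One small correction: when $F_k=\emptyset$ there is no generating world, so your derivation of $a\in F_k$ does not apply and executing is vacuously safe rather than contradictory, so the abstention and maximal-liveness clauses should be read over realizable observations (where your converse already gives $F_k\neq\emptyset$), a restriction the paper's proof also leaves implicit.
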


\noindent The proof is an indistinguishability argument (Appendix~\ref{app:proofs}): two feasible actions
yield one observation from two within-budget worlds, so committing to either is wrong in the other, while a
unique feasible action is the truth in every allowed world. Whenever this certifier abstains two worlds are
genuinely indistinguishable, so ``it abstains too often'' is not an objection without enlarging $k$. The
budget couples the fields, so feasibility is joint, not per-field. A single-sourced field has two feasible
values for any $k\ge1$, so no certifier executes it without a trusted anchor: the anchor is necessary.
The rules below decide each field separately, a sound specialization: unique per-field admission implies the
singleton the theorem requires, so the composition is $k$-safe, though it can abstain where the joint rule
certifies (Proposition~\ref{prop:compose}).

\begin{proposition}[Agreement or escalate, the conservative case]\label{prop:safe}
Under agreement-or-escalate over $N$ corruption-distinct classes, if at least one class is honest then $M$ is
$(N-1)$-safe-with-abstention.
\end{proposition}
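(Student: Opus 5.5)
The argument is a direct case analysis on the output of agreement-or-escalate. It does not go through Theorem~\ref{thm:feasible}, though I will note afterwards how it fits there. Fix a field with true value $f(x)$ and $N$ corruption-distinct classes, and let the adversary corrupt any $C$ with $|C|\le N-1$. By corruption-distinctness the dependency sets are disjoint, so $b_{N-1}=N-1$. At most $N-1$ classes are affected and at least one class $h$ is untouched. This is exactly the hypothesis that one class is honest. Since $h$ is honest, its attestation reports $f(x)$ after deterministic canonicalization.

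The certifier has only two possible outputs. If it abstains, the output is $\bot$, which is allowed. If it executes some $v$, the rule requires the mandatory quorum to be present and every corruption-distinct class in it to agree on $v$ after canonicalization. I need the honest class $h$ to be among the classes that vote. Here I use the claim-atomic discipline: votes are counted once per authenticated control domain, and the adversary cannot forge an attestation bound to an origin it does not control. So it can neither suppress nor impersonate $h$'s contribution as a distinct class. It could only withhold its own classes, which at worst makes the quorum fail and forces $\bot$. Unanimity then forces $v=f(x)$. Hence $M_C(x)\in\{f(x),\bot\}$ for every such $C$, which is the definition of $(N-1)$-safety with abstention.

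The main obstacle is making rigorous that $h$ actually participates in the agreement check, rather than being silently dropped. I would argue it through the mandatory-source rule. The skeleton $\Gamma$ fixes the eligible and mandatory sources before any untrusted evidence is read. The quorum therefore ranges over all $N$ classes, and any missing class triggers abstention instead of being ignored. If the quorum were allowed to be a strict subset chosen at run time, the adversary could present only corrupted classes that agree on a wrong value. So I would state explicitly that the quorum is all $N$ mandatory classes. Under that reading the honest attestation is always compared.

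Finally, I would connect the result to Theorem~\ref{thm:feasible}. Executing under unanimity means no attestation disagrees with $v$, so $v$ is feasible at any budget. Any $v'\neq v$ would require every one of the $N$ classes to be corrupted, which is $N>N-1$ domains. So $F_{N-1}=\{v\}$, consistent with the theorem. For shared domains, the same argument goes through with $N-1$ replaced by $m-1$ as the safety radius.
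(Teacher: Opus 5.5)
Your proof is correct and is essentially the paper's argument. The paper phrases it through Theorem~\ref{thm:feasible}: unanimity leaves the agreed value as the only feasible one, since a wrong value would need $|C|=N$. That is exactly your direct case analysis together with your closing remark, and your explicit requirement that the quorum be the full set of $N$ mandatory classes fixed by $\Gamma$, so the honest class cannot be silently dropped, spells out an assumption the paper's one-line proof leaves implicit.
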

\noindent Unanimity makes the agreed value the only feasible one once $N>k$. It is safe but not maximally live, since it
abstains where the threshold rule still certifies.

\begin{proposition}[Unique-threshold admission is maximally live]\label{prop:exact}
Admit a value when at most $k$ corruption-distinct classes dissent, equivalently when its support
is at least $N-k$, and it is the unique value clearing that threshold. Then any admitted value equals $f(x)$, and $N$ agreeing classes are certifiable against
budget $k$ if and only if $N>2k$, the exact-value radius $\lfloor (N-1)/2\rfloor$.
\end{proposition}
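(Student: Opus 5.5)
The argument has two parts: soundness of unique-threshold admission (any admitted value equals $f(x)$), and the tight redundancy condition $N>2k$. Throughout, classes are corruption-distinct, so by the budgeted-influence definition $b_k=k$ and a corruption of at most $k$ domains alters at most $k$ of the $N$ attestations. The proof works per field and inherits the joint setting of Theorem~\ref{thm:feasible} only through this vote-counting bound.

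\emph{Soundness.} Fix any $|C|\le k$. At least $N-k$ classes are outside the reach of $C$ and report $f(x)$ honestly, so $f(x)$ always has support at least $N-k$ and always clears the threshold. If the rule admits a value $v$, uniqueness forces $v=f(x)$, since $f(x)$ is itself a clearing value. Hence the certifier never outputs a wrong value and is $k$-safe-with-abstention. Uniqueness carries the whole argument here. Equivalently, every clearing value is $k$-feasible in the sense of Definition~\ref{def:feasible}, because its dissenters number at most $k$ and are hit by their own domains. Unique clearance is therefore exactly the per-field singleton $F_k=\{v\}$, and Theorem~\ref{thm:feasible} supplies maximal liveness for the field.

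\emph{Sufficiency of $N>2k$.} Suppose $N$ classes agree on $v$ in the observation. Then $v$ has $0\le k$ dissenters and clears the threshold. Any other value $w$ has support $0<N-k$, so $v$ is the unique clearer and is admitted. For exact-value tolerance, consider any within-budget corruption of an honest world whose truth is $v$. The truth keeps support at least $N-k$. Any challenger gains at most $k$ votes, and $k<N-k$ exactly when $N>2k$. So the truth stays the unique clearer and $M_C(x)=f(x)$.

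\emph{Necessity.} This is the main step. Suppose $N\le 2k$. The adversary corrupts $k$ classes to report $w\ne v$, leaving $N-k\le k$ honest classes reporting $v$. Now both values clear the threshold: $w$ has at most $N-k\le k$ dissenters, and so does $v$. The observation is therefore also explained by a world in which $w$ is true and the $N-k\le k$ classes voting $v$ are the corrupted ones. Both $v$ and $w$ are $k$-feasible, so $|F_k|\ge 2$, and Theorem~\ref{thm:feasible} forces every $k$-safe certifier to abstain; in particular the unique-threshold rule abstains. Exact-value certification of the agreeing configuration fails under this attack. Hence certifiability holds iff $N>2k$, and the largest tolerable budget is the largest $k$ with $2k\le N-1$, namely $\lfloor (N-1)/2\rfloor$.

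The main care point is the necessity direction: the indistinguishable pair of worlds must both respect the budget and corruption-distinctness. Corruption-distinctness guarantees that corrupting $N-k$ classes costs exactly $N-k\le k$ domains, which is what makes the second world within budget.
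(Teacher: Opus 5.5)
Your proof is correct and takes essentially the same route as the paper. Both arguments specialize Theorem~\ref{thm:feasible} to disjoint dependency sets, so that $k$-feasibility reduces to ``at most $k$ dissenters'', and both use the fact that $f(x)$ always keeps support at least $N-k$. Both then show that a challenger can also reach $N-k$ exactly when $N\le 2k$, by concentrating the $k$ corrupted classes on it. Your direct uniqueness argument for soundness and your explicit two-world construction for necessity simply spell out the paper's one-line specialization.
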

\noindent For disjoint domains this is Theorem~\ref{thm:feasible} made concrete, the singleton condition being
$N>2k$. The threshold is the standard Byzantine bound, here applied per predicate over a dependency-aware
electorate with vote identity preserved, and shown \emph{optimal} rather than one design among many.

\begin{proposition}[Vote identity bounds adversarial weight]\label{prop:launder}
When each attestation is attributed to its authenticated originating control domain and a domain casts at
most one vote per predicate, an adversary controlling $k$ control domains contributes at most $k$ votes,
regardless of how many distinct original records or derived copies it emits.
\end{proposition}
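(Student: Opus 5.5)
The argument is a counting argument over the map from attestations to voters. Fix a predicate and the set of attestations the certifier observes for it. The vote-identity discipline gives a tally map $\pi$ that sends each attestation to its authenticated originating control domain. The tally is then the image of $\pi$: each domain in the image casts exactly one vote, whatever number of attestations it has in its preimage. I will bound the number of votes the adversary can influence by bounding the part of this image it can reach.

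The key steps run as follows.

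First, I will show that every attestation the adversary emits, or whose value it sets, has $\pi(att)\in C$. This is where the unforgeability assumption of the threat model is used. An attestation bound to the origin identifier of a domain $u\notin C$ cannot be produced by the adversary, so any record it creates carries an origin in $C$. This holds for fresh original records and for laundered copies attributed to roots it controls. A copy inherits its source's root, but it is attributed to the domain that emits it. An honest domain outside $C$ that republishes corrupted content reports through its own domain.

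Second, I handle attestations that the adversary alters indirectly through dependency sets, meaning those with $D(att)\cap C\neq\emptyset$ but $\pi(att)\notin C$. For these I will invoke the standing restriction that the proposition is stated over corruption-distinct classes. There, dependency sets are disjoint and each attestation's dependency set is tied to its own originating domain, so altering it requires $\pi(att)\in C$. In the general case, the same argument with $k$ replaced by $b_k$ gives the bound $b_k$, as the Budgeted influence definition already remarks.

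Third, combining the two steps, the set of adversarially influenced votes is contained in $\pi^{-1}(C)$ as it appears in the tally. Since one vote is cast per domain per predicate, the number of such votes is at most $|\pi(\text{observed})\cap C|\le|C|\le k$. The count of records or copies emitted never enters this bound, because $\pi$ collapses any number of them onto one vote.

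I expect the main obstacle to be the second step. One must state precisely what "contributes a vote" means when an honest domain's attestation is erased or altered through a shared upstream. I plan to define the contribution as the set of domains whose vote value differs from the honest world. Under corruption-distinctness that set is a subset of $C$. I would note explicitly that without this hypothesis the correct bound is $b_k$, not $k$.
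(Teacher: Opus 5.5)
Your Steps 1 and 3 are the paper's proof. Authentication places every record the adversary emits, original or copy, on an originating domain in $C$, and one vote per domain per predicate caps the tally at $|C|\le k$. That already proves the statement, which concerns only records the adversary emits. The paper stops there and leaves influence through shared upstreams to the hitting-set and $b_k$ machinery.

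So Step 2 is not needed, and its key claim is false as stated. Disjoint dependency sets do not imply that altering an attestation requires $\pi(att)\in C$. Take a lone transposition, by an honest domain $u\notin C$, of an upstream $c\in C$ that casts no vote of its own on the predicate. Then $D(att)=\{u,c\}$ is disjoint from every other class's dependency set. Yet corrupting $c$ alters the attestation, while under your map $\pi(att)=u\notin C$. What disjointness actually gives is a cardinality bound: each corrupted domain meets at most one class's dependency set, which is the paper's $b_k=k$. So the set of domains whose vote changes has at most $k$ members but need not lie inside $C$. That cardinality bound is all your broader reading of ``contributes'' needs, so replace the containment claim with it.

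Part of why you needed Step 2 is your attribution convention. The paper attributes a copy to the origin of its source by root inheritance, not to the domain that emits it. An honest republication of corrupted content therefore collapses onto the corrupted domain's single vote, in line with the paper's remark that a domain that republishes another casts no second vote. Your emitter-based rule instead gives the republisher a vote of its own, and then has to lean on corruption-distinctness to exclude that case. Your rule does make Step 1 slightly tidier, since even an adversary's copy of an honest record lands in $C$. But it gives up the automatic collapse of laundered chains that the paper's convention provides.
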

\noindent The bound is immediate: the votes for a predicate are the distinct controlling domains, and the
adversary controls $|C|\le k$ of them. The subtlety is what attribution requires. Counting once per
provenance root is not enough. Root-counting collapses laundered \emph{copies} onto the root they inherit,
but a single compromised authority can also issue several distinct \emph{original} records, each with its own
root, so root-counting would hand one domain several votes and break the radius. Authentication binds every
record, original or copied, to the domain that produced it, and the certifier counts domains rather than
roots. Without this, $t$ copies or $t$ original records from one corrupted source contribute $t$ votes and
manufacture a quorum from a single corruption, the attack the domain rule removes and the counting experiment
in Section~\ref{sec:mech} isolates.

\paragraph{Trusted computing base.}
These guarantees hold relative to four assumptions, which we state explicitly. First, authentication.
Each attestation is bound to an unforgeable origin identifier, so the adversary cannot fabricate
attestations from classes it does not control, and resistance to Sybil additions reduces to this. Second,
trusted anchors. The onboarding allowlist and account-name registry lie outside the corruption budget, and
the safety of single-source fields rests on them, which is why the account result is a trusted-anchor
baseline rather than the radius certificate. Third, a trusted join key, which binds each attestation to one
transaction and prevents splicing evidence across transactions. Fourth, sound extraction. Values are
canonicalized deterministically before aggregation, and the safety argument assumes an honest class's value
is extracted to $f(x)$. The extractor is itself a shared control domain. One model, prompt, canonicalizer,
and provider read every source, so a systematic parser fault or a malicious provider is a common-mode
corruption that source isolation does not remove. The guarantee is therefore conditional on
the extractor being sound on unaffected evidence. The model-in-the-loop study in Section~\ref{sec:mitl}
relaxes rather than removes this assumption, and Appendix~\ref{app:exp} measures it directly. It shows that a content compromise confined to a single source
becomes an abstention, because source-isolated reading keeps one tampered document from satisfying a
multi-source rule, but it samples extractor faults rather than proving them independent across sources.

\section{Mechanism Validation}\label{sec:mech}

Before placing a language model in the loop we validate the certifier's aggregation in isolation, against an
adversary that controls one evidence class, in a model-free harness so the result speaks to the certificate
rather than end-to-end behavior. A mechanism is load-bearing when removing it opens exactly one attack while
the full certifier stays safe. Removing the trusted join key opens
amount splicing, removing the onboarding allowlist opens a name-matched mule, and removing the
mandatory-source rule opens a single-source decision, each and nothing else, while the full certifier admits
none. In the counting panel laundering succeeds only without domain-bound counting and a Sybil addition
only without authentication, the empirical face of Proposition~\ref{prop:launder}. Appendix~\ref{app:exp}
reports the full ablation in Table~\ref{tab:mech}.

The two robustness notions also appear directly: a three-source relation abstains under one corrupted source
without certifying a wrong value, and a five-source relation keeps certifying the correct value, the
exact-value regime of Proposition~\ref{prop:exact}, which Section~\ref{sec:baseline} then tests against an
optimizing attacker rather than randomized worlds.

\section{How Often Does the Multi-Source Precondition Hold?}\label{sec:scope}

A corruption radius is useful only where the corroboration it consumes exists, so we measure how often that
is the case. Sanctions designation is the cleanest open setting in which one action-critical fact, that an
entity is designated, is asserted by several nominally independent authorities. We use the consolidated
OpenSanctions \texttt{sanctions} collection~\cite{opensanctions}, which after entity resolution contains
$70{,}966$ entities carrying at least one designation authority, and for each entity compute a mapped-domain
count as a minimum hitting set over issuing authorities. We report it at two levels. The raw-dataset level
treats every source dataset as independent, a loose upper bound inflated by consolidated lists that republish
others and by authorities that publish several files. The control-domain level folds datasets sharing a
jurisdiction or body into one authority, and is the tighter proxy.

The full distribution is given in Table~\ref{tab:sanctions} (Appendix~\ref{app:exp}), and the control-domain
column is the one to read. After
collapsing datasets that share a jurisdiction or body, about $15\%$ of entities show three or more mapped
issuing domains, about $25\%$ show two or more, and roughly three-quarters rest on a single domain, so the
well-corroborated entities are the heavily cross-listed ones. This is an upper bound on candidate
multi-witness coverage, not evidence that $15\%$ of payment actions obtain an exact-value radius. A mapped
issuing domain $\hat{m}$ is not yet a corruption-distinct authority $m$, because shared upstream data and
coordinated designations inflate the count. Recovering the true hitting set would require aligning
programme, measure, and legal basis first. The corroboration the certificate consumes is present for a
minority of decisions, a wider band supports safety with abstention, and the single-source majority is why the
design needs trusted anchors where real evidence does not corroborate.

\paragraph{A second domain: software supply-chain provenance.}
Software artifact provenance is a second real setting where one action-critical fact, the source an installed
package resolves to, can be independently corroborated, and where corruption-distinctness is the crux. For
$450$ real npm packages, the actual dependency closure of twenty widely used roots, we measure three witnesses
to ``package $P$ comes from repository $R$''. The first is the registry's declared repository. The second is
the SLSA build provenance, Sigstore-signed and transparency-logged. The third is the repository
back-reference, whether the claimed repository itself declares $P$, a binding a substituted package cannot
forge without compromising the repository. Counted as three separate witnesses,
$16\%$ of packages appear to have all three. But the provenance is built by GitHub
Actions in every case and the back-reference is GitHub repository content, so both root in one control domain
and a single repository compromise forges both. After collapsing this shared origin the third witness
vanishes (Figure~\ref{fig:supply}): about $90\%$ of well-attested packages have exactly two corruption-distinct
witnesses, the registry and the repository, and none reach three, the construction of
Section~\ref{sec:defs} on real infrastructure. The collapse is not specific to npm: on $300$ PyPI packages,
where PEP-740 adoption is about double npm's, an apparent three-witness structure on $32\%$ again reduces to
two domains rooted in GitHub.

Two corruption-distinct witnesses give safety with abstention, not exact-value certification, under a
one-corruption adversary. Driving the agree-or-escalate certifier with them, a registry compromise that
substitutes a malicious source is caught as a disagreement on every corroborated package, so the certifier
certifies no malicious source ($0$ of $450$), whereas a registry-trusting baseline accepts it every time.

A potential third signal lies outside the package author's control. A repository's reputation, its age,
stars, and contributors, is built by the platform and community over years, not by whoever holds push access,
so a typosquat author who forges a back-reference on their own repository cannot fake it. Across the same
$307$ declared repositories the median repository is $11$ years old with $260$ stars, and $74\%$ clear a
modest bar ($\ge2$ years, $\ge50$ stars, $\ge3$ contributors). Reputation attests that a repository is long
established, not that $P$ came from $R$, so it is an eligibility signal rather than a third witness, and we
report availability on legitimate packages, not detection against malicious ones.

\begin{figure}[t]
\centering
\resizebox{0.76\linewidth}{!}{%
\begin{tikzpicture}[
  >={Latex[length=2mm]},
  wit/.style={draw=cardline, line width=0.8pt, rounded corners=2.5pt, fill=white, align=center,
    inner sep=3.5pt, font=\footnotesize, text=ink, minimum height=9mm, text width=27mm},
  dom/.style={draw=cProc!55, line width=1pt, rounded corners=2.5pt, fill=cProc!8, align=center,
    inner sep=3.5pt, font=\footnotesize\bfseries, text=ink, minimum height=7.5mm},
  gh/.style={draw=cDanger!65, line width=1pt, rounded corners=2.5pt, fill=cDanger!8, align=center,
    inner sep=3.5pt, font=\footnotesize\bfseries, text=ink, minimum height=7.5mm},
  ar/.style={-{Latex[length=1.8mm]}, line width=0.9pt, ink!62},
  ard/.style={-{Latex[length=1.8mm]}, line width=1pt, cDanger!70},
]
  % three apparent witnesses (top row)
  \node[wit] (w1) {\textbf{registry}\\[1pt]\scriptsize declares repo $R$};
  \node[wit, right=9mm of w1] (w2) {\textbf{build provenance}\\[1pt]\scriptsize SLSA, signed and logged};
  \node[wit, right=9mm of w2] (w3) {\textbf{repo back-reference}\\[1pt]\scriptsize repository declares $P$};
  % two corruption-distinct domains (bottom row)
  \node[dom, text width=24mm] (d1) at ($(w1.south)-(0,9mm)$) {registry};
  \node[gh, text width=34mm] (d2) at ($(w2.south)!0.5!(w3.south)-(0,9mm)$) {GitHub};
  % arrows: registry independent, provenance and back-reference both into GitHub
  \draw[ar] (w1.south) -- (d1.north);
  \draw[ard] (w2.south) -- (d2.north);
  \draw[ard] (w3.south) -- (d2.north);
  % shared-root annotation
  \node[font=\scriptsize\itshape, text=cDanger!70!black, below=1mm of d2]
    {one repository compromise forges both};
  % row labels on the left
  \node[font=\scriptsize\bfseries, text=ink!62, anchor=east, align=right] at ($(w1.west)-(5mm,0)$)
    {apparent\\3 witnesses};
  \node[font=\scriptsize\bfseries, text=ink!62, anchor=east, align=right] at ($(d1.west)-(5mm,0)$)
    {corruption-\\distinct\\2 domains};
\end{tikzpicture}}
\caption{The apparent third witness to package-source identity collapses on real infrastructure. The registry,
the SLSA build provenance, and the repository back-reference look independent, but the provenance is produced
by GitHub Actions and the back-reference is GitHub repository content, so both root in one control domain that
a single repository compromise forges. Counting corruption-distinct domains leaves two, not three. This is not
marginal: $16.4\%$ of $450$ npm packages and $32.3\%$ of $300$ PyPI packages appear to carry three independent
witnesses, and none do once the shared GitHub root collapses, leaving about $90\%$ at exactly two.}
\label{fig:supply}
\end{figure}

\section{Model in the Loop}\label{sec:mitl}

To test the certifier with a fallible, manipulable
model in a real agent loop, we evaluate it end to end on live models served through OpenRouter. The action is
a payment, $\{$payee, account, amount$\}$, assembled from $N$ evidence documents. In an attacked episode, an adversary has corrupted $k=1$ of them with an embedded social-engineering injection. The model reads each source in isolation, which
the certifier votes over, and separately reads all documents together as the naive baseline. Appendix~\ref{app:exp}
gives the full protocol.

Table~\ref{tab:models} reports five models at $80$ episodes each. Two things stand out. First, the injection
works: only Opus~4.8 resists it, while every other model, frontier GPT-5.5 and Gemini~3.1~Pro included,
follows it on every corrupted source, so the naive agent that trusts the model's own proposal executes the
fraudulent payment on $72$ to $100\%$ of attacks. Second, the certifier admits no unsafe action on any model.
Where the model is fooled, the rule decides the outcome. The agreement rule ($N=3$) turns the single corrupted
extraction into a safe abstention. The unique-threshold rule
($N=5>2k$, shown) recovers the correct value and abstains on only $0$ to $6\%$, because the honest sources
still clear the $N-k$ threshold without the misled one. These results instantiate Propositions~\ref{prop:safe}
and~\ref{prop:exact}: the guarantee holds whether or not the model resists. The liveness cost
is the honest-world false-abstention, $0$ to $12\%$ under agreement and at most $4\%$ under the threshold rule, since
the vote tolerates the residual extraction variance that strict agreement does not.

\begin{table}[t]
\centering
\footnotesize
\caption{Each of five current models runs in a real agent loop (OpenRouter) for $80$ episodes, $54$ with an
injection in one of $N=5$ sources and $26$ clean. The injection fools every model but Opus~4.8 and the naive
agent's unsafe rate reaches $100\%$, yet the certifier admits no unsafe action.}
\label{tab:models}
\setlength{\tabcolsep}{5.5pt}
\begin{tabular}{lccc>{\columncolor{cSafe!12}}c}
\toprule
\rowcolor{ink!8} model & inj.\ follow & naive unsafe & false-abstain & cert.\ unsafe\\
\midrule
Claude Opus 4.8   & 0\%   & 0\%   & 3.8\% & 0\%\\
GPT-5.5           & 100\% & 72\%  & 0\%   & 0\%\\
Gemini 3.1 Pro    & 100\% & 100\% & 0\%   & 0\%\\
Gemini 3.5 Flash  & 100\% & 98\%  & 0\%   & 0\%\\
Claude Haiku 4.5  & 100\% & 100\% & 0\%   & 0\%\\
\bottomrule
\end{tabular}
\end{table}

\section{Adaptive Attacker and Proxy Baselines}\label{sec:baseline}

\begin{figure}[t]
\centering
\begin{tikzpicture}
\begin{axis}[
  ybar,
  width=0.92\linewidth, height=3.0cm,
  bar width=8pt,
  ymin=0, ymax=126,
  ytick={0,25,50,75,100},
  ylabel={rate (\%)},
  ylabel style={font=\footnotesize, text=ink},
  symbolic x coords={naive, capability gating, policy monitor, certifier (ours)},
  xtick=data,
  xticklabel style={font=\scriptsize, text=ink},
  yticklabel style={font=\scriptsize, text=ink!70},
  axis background/.style={fill=paperbg},
  axis line style={ink!50, line width=0.7pt},
  tick style={ink!50}, tick align=outside,
  ymajorgrids=true, grid style={cardline!55, dotted, line width=0.5pt},
  legend style={font=\scriptsize, draw=ink!30, fill=white, at={(0.5,1.04)}, anchor=south,
    legend columns=2, column sep=10pt},
  legend cell align=left,
  enlarge x limits=0.16,
  nodes near coords, nodes near coords style={font=\tiny, text=ink!75, /pgf/number format/fixed,
    /pgf/number format/precision=0},
  nodes near coords align={vertical},
]
\addplot[fill=cDanger!78, draw=cDanger!92, line width=0.5pt]
  coordinates {(naive,100) (capability gating,0) (policy monitor,18.2) (certifier (ours),0)};
\addplot[fill=cSafe!72, draw=cSafe!92, line width=0.5pt]
  coordinates {(naive,45) (capability gating,0) (policy monitor,28.8) (certifier (ours),97.5)};
\legend{unsafe-execution rate, utility}
\end{axis}
\end{tikzpicture}
\caption{Head-to-head on the multi-source action for a model the injection fools (Haiku 4.5). The certifier is
the only one of the four that is at once safe ($0\%$ unsafe) and high-utility ($98\%$). The naive agent is fully
unsafe, capability gating (CaMeL/FIDES) executes nothing, and the policy monitor (PCAS) is $18\%$ unsafe at
only $29\%$ utility. The same pattern holds across all five models.}
\label{fig:frontier}
\end{figure}

A defense that abstains freely is safe for a trivial reason, so the real test is safety \emph{and} utility
against faithful versions of the competing defense logic, compared on the domain where the comparison is fair.
A structured action assembled from several corruption-distinct sources is that domain: the standard injection
benchmark AgentDojo \cite{agentdojo} gives one source per fact, so corroboration is unavailable there and any
corroboration defense would abstain throughout. We pit the certifier against CaMeL/FIDES-style gating
\cite{camel,fides}, which refuses actions whose decisive fields derive only from untrusted evidence, and a
PCAS-style policy monitor \cite{pcas} with an allowlist of vendors, known accounts, and an amount cap.

On a separate run, $80$ episodes per model with $44$ attacked, Figure~\ref{fig:frontier} separates the defenses. Capability gating is safe but executes nothing, the blanket refusal. The policy
monitor catches gross substitutions yet stays unsafe on in-policy ones, up to $18\%$, and refuses every
legitimate payee it has not pre-registered, so its utility on novel payees is $0\%$. The certifier alone is at
once $0\%$ unsafe and high-utility: the threshold rule recovers the correct value under attack, $96$ to $100\%$
correct overall, and pays every novel payee, $100\%$, compared with $0\%$ for the policy monitor, because corroboration needs no maintained
ground-truth list. A model-free white-box adaptive attacker that rewrites one source's fields, launders
copies, and revises against each refusal never forces an unsafe execution against the certifier across
$2{,}000$ worlds, while breaking both baselines in every one, although each still blocks one attack family
(Table~\ref{tab:family}).

The harness's own baselines, action-gating and provenance-only, differ from the model-based pair of
Figure~\ref{fig:frontier}. Table~\ref{tab:family} breaks them down by attack family: action-gating admits every family
except an allowlist-missing destination, provenance-only every family except a currency swap, and the certifier
none. Each blocks one family the other misses, and both stay unsafe on the four they share.

The exact-value radius is visible in the same harness: sweeping corroborating payee sources under a
single-source vendor swap, the certifier abstains at two sources and certifies the correct vendor at three or
more, the radius-one case of Proposition~\ref{prop:exact}, with $k=2$ needing $N\ge5$ and $k=3$ needing
$N\ge7$ (Appendix~\ref{app:exp}).
Safety is not bought by refusing, since honest-world false-abstention stays at a few percent
under the threshold rule (Section~\ref{sec:mitl}). These baselines
reproduce the defense logic, not the full engineering of a deployed system.

\section{Related Work}\label{sec:related}

Information-flow and capability systems secure the agent around an untrusted model. CaMeL attaches
capabilities from the trusted query \cite{camel}, FIDES permits a consequential call only when every input is
high-integrity \cite{fides}, PCAS compiles authorization policies into a reference monitor \cite{pcas}, and
AuthGraph aligns a parameter-source provenance graph against a clean-context authorization graph
\cite{authgraph}. Table~\ref{tab:compare} places these side by side. Descending from control-flow
integrity \cite{cfi} and information-flow control \cite{ifc}, they gate on where data came from rather than
whether a value is correct. A decisive value from a default-untrusted source therefore forces a blanket
refusal, whereas our certificate decides admission from corroboration.

The nearest methodological neighbor, RobustRAG, isolates retrieved passages and
aggregates their answers to certify a quality bound against a bounded number of injected passages
\cite{robustrag}. We share the isolate-then-aggregate shape but count corruption-distinct control domains
through a minimum hitting set rather than passages, so laundered evidence inflates its budget but not ours,
and certify a field-typed action rather than aggregate free text.

\looseness=-1 The motivation we build on was stated independently by ARGUS, which observes that defenses able
only to refuse or isolate cannot admit a legitimate environment-supplied value \cite{argus}. ARGUS is closest
in spirit and concurrent, but reasons about which span of one observation produced an argument, while we
reason about how many independent sources corroborate a value and with what margin. Detection methods flag injections by behavioral consistency rather than certifying a corruption bound
\cite{melon,taskshield}, and persistent memory is its own attack surface \cite{minja,agentpoison,memorygraft}.
The concurrent MemLineage refuses actions with untrusted ancestors \cite{memlineage}, a recall problem. Our
domain-bound counting instead stops duplicates from manufacturing a quorum, a counting problem. Finally, the
certificate generalizes Byzantine quorum voting \cite{byzantine} from a fixed electorate to a set system over
authenticated control domains. The electorate is the novelty: it must be inferred from shared dependencies
rather than given, and a domain that republishes another casts no second vote.

\section{Limitations and Conclusion}\label{sec:concl}

\looseness=-1 Four limits bound the claim: a fully adaptive live-model attacker remains untested, the counts
are upper-bound proxies, the guarantee rests on the assumptions of Section~\ref{sec:defs}, and corroboration
remains scarce. Where it exists, the certifier admitted no unsafe action even where the model followed the
injected instruction.

\bibliographystyle{ieeetr}
\bibliography{references}

\appendix
\section{Proofs}\label{app:proofs}

We prove the characterization theorem and the propositions that specialize it. Throughout, a field has
true value $f(x)$, the adversary controls a set $C$ of evidence classes with $|C|\le k$, honest classes report
$f(x)$ after canonicalization, and votes are counted once per authenticated control domain.

\begin{proof}[Proof of Theorem~\ref{thm:feasible}]
\emph{Soundness.} If $a,a'\in F_k$ with $a\ne a'$, pick corruption sets $C,C'$ of size at most $k$ hitting the
disagreeing attestations of $a$ and of $a'$. The world with true action $a$ and corrupted domains $C$ produces
exactly the observation, as does the world with true action $a'$ and $C'$. A certifier sees only the
observation, so any action it executes is wrong in one of the two worlds, and it must abstain. Execution is
thus permitted only when $F_k=\{a\}$. \emph{Completeness.} If $F_k=\{a\}$, every $\le k$-corruption world
consistent with the observation has true action $a$, since any consistent $a'\ne a$ would itself be
$k$-feasible. Executing $a$ is correct in all such worlds. \emph{Optimality.} By soundness every $k$-safe
certifier abstains whenever $|F_k|\ne1$, exactly where the unique-feasible certifier does, so its abstention
set is contained in that of every $k$-safe certifier.
\end{proof}

\begin{proof}[Propositions~\ref{prop:safe} and~\ref{prop:exact}]
Both specialize Theorem~\ref{thm:feasible} to corruption-distinct classes, where dependency sets are disjoint
and $\tau$ counts dissenting domains, so a value is $k$-feasible iff at most $k$ classes dissent, that is its
support is at least $N-k$. Honest classes report $f(x)$, so $f(x)$ always has support at least $N-k$.
\emph{Threshold rule.} $F_k=\{f(x)\}$ iff no rival reaches support $N-k$, that is $k<N-k$, or $N>2k$, with radius
$\lfloor (N-1)/2\rfloor$. At $N\le2k$ the adversary concentrates $k$ classes on a challenger of support
$k\ge N-k$, two values are feasible, and the rule abstains. \emph{Agreement.} Unanimity makes $f(x)$ the only
feasible value once $N>k$, so any dissent forces abstention and a wrong value needs $|C|=N$, giving
$(N-1)$-safety with abstention. The margin form $c_1-c_2>2k$ is sufficient but not necessary and corresponds
to the stronger post-attack recertification guarantee with the looser $\lfloor (N-1)/4\rfloor$ radius.
\end{proof}

\begin{proof}[Proof of Proposition~\ref{prop:launder}]
Votes for a predicate are counted once per authenticated control domain. Let the adversary control the
domains in $C$ with $|C|\le k$. By authentication, every record the adversary emits, whether an original
record or a copy derived from one, is bound to an originating control domain in $C$: a copy inherits the
origin of its source by root inheritance, and an original record carries the unforgeable origin identifier of
the domain that produced it, which the adversary cannot forge for a domain outside $C$. The set of domains
onto which the adversary's records map is therefore a subset of $C$, of size at most $k$. Because the tally
counts at most one vote per distinct control domain per predicate, the adversary's effective vote count is at
most $|C|\le k$, independent of how many original records or copies it emits. If instead votes were counted
per provenance root, a single compromised domain emitting $t$ distinct original records would contribute $t$
votes, and choosing $t$ large enough would meet any fixed quorum, the attack the domain rule removes.
\end{proof}

\paragraph{Field-wise composition against the joint certifier.} Theorem~\ref{thm:feasible} reasons jointly, over
one budget shared by all fields, whereas the deployed certifier decides each field under its own rule and
executes when every field is admitted. The two are not the same object, and the relation between them is the
following.

\begin{proposition}[Composition is sound and conservative]\label{prop:compose}
Let each field $j$ admit $a_j$ only when $a_j$ is the unique value whose dissenting attestations are covered
by at most $k$ control domains. If every field admits, then $F_k=\{a\}$, so the composed certifier is
$k$-safe. The converse fails, so its abstention set contains that of the joint certifier of
Theorem~\ref{thm:feasible}.
\end{proposition}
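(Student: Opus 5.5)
The plan has two parts: soundness by contradiction using monotonicity of the hitting-set number, and the failure of the converse by an explicit two-field counterexample. Throughout I treat the observation as generated by some within-budget world, that is, a true action $a^\ast$ satisfying $\Gamma$ and a corruption set $C$ with $|C|\le k$. The honest attestations then report $a^\ast_j$, so $C$ hits every attestation in $\bigcup_j \mathrm{Dis}_j(a^\ast)$. This shows $a^\ast\in F_k$, so $F_k\neq\emptyset$ (Definition~\ref{def:feasible}). I also take the composed certifier to check $\Gamma$ at the skeleton before any vote is counted, as Section~\ref{sec:defs} prescribes.

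\emph{Soundness.} The key observation is that the hitting-set number $\tau$ is monotone under taking subsets. Any $X$ meeting the dependency sets of a family also meets those of every subfamily, so
\[
\tau\bigl(\mathrm{Dis}_j(b)\bigr)\le \tau\Bigl(\textstyle\bigcup_i \mathrm{Dis}_i(b)\Bigr)
\]
for every field $j$ and every candidate action $b$. Hence every $b\in F_k$ is per-field feasible in each coordinate: each $b_j$ has dissent covered by at most $k$ domains. Now suppose every field admits, giving $a=(a_1,\dots,a_d)$, and suppose $b\in F_k$ with $b\neq a$. Then $b_j\neq a_j$ for some $j$, and $b_j$ is a per-field feasible value different from $a_j$. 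This contradicts the uniqueness clause of field $j$'s admission rule. So $F_k\subseteq\{a\}$, and combined with $F_k\neq\emptyset$ we get $F_k=\{a\}$. The completeness half of Theorem~\ref{thm:feasible} then gives that executing $a$ is correct in every consistent within-budget world, so the composition is $k$-safe. I expect this step to be the main subtlety. Per-field admission by itself only bounds each $\tau(\mathrm{Dis}_j(a))$ by $k$ separately, while the union could need up to $dk$ domains. Membership $a\in F_k$ therefore cannot be read off the per-field checks; it has to come from $a=a^\ast$, that is, from the generating-world assumption.

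\emph{Converse fails.} I would give a two-field instance with $k=1$, three disjoint singleton domains $d_1,d_2,d_3$, and a $\Gamma$ that admits all candidate values.
\begin{itemize}
\item Field~1 has $d_1$ and $d_2$ asserting $u$, and $d_3$ asserting $v$. Value $u$ is dissented only by $d_3$, while $v$ would need $\{d_1,d_2\}$ to be corrupted, so field~1 uniquely admits $u$.
\item Field~2 has $d_3$ asserting $w$ and $d_1$ asserting $z$. Both values have a single dissenter, so field~2 abstains under the per-field rule.
\end{itemize}
Jointly, $(u,w)$ has disagreeing set hit only by $\{d_3,d_1\}$, so $\tau=2>1$ and it is infeasible. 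By contrast $(u,z)$ is explained by $\{d_3\}$ alone, so $\tau=1$ and it is feasible. Any action with first field $v$ is already infeasible. Thus $F_1=\{(u,z)\}$ is a singleton on which the joint certifier of Theorem~\ref{thm:feasible} executes while the composition abstains. Together with soundness, this shows the composed abstention set strictly contains the joint one.
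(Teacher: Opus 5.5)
Your proof is correct and follows the paper's argument. Monotonicity of $\tau$ under $\mathrm{Dis}_j(b)\subseteq\bigcup_i\mathrm{Dis}_i(b)$ places $F_k$ inside the product of per-field feasible sets, uniqueness gives $F_k\subseteq\{a\}$, and nonemptiness comes from the generating within-budget world, as you flag. Your explicit $k=1$ instance with $F_1=\{(u,z)\}$ checks out, and it makes concrete the cross-field budget coupling that the paper only describes in words when showing the converse fails.
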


\begin{proof}
Soundness. For any action $b$, $\mathrm{Dis}_j(b_j)\subseteq\bigcup_i \mathrm{Dis}_i(b)$, so a hitting set of
size at most $k$ for the union also hits each field's dissent, giving
$\tau(\mathrm{Dis}_j(b_j))\le\tau(\bigcup_i \mathrm{Dis}_i(b))$. Hence every jointly feasible action is
per-field feasible, and $F_k$ is contained in the product of the per-field feasible sets. If each field admits
a unique value then that product is $\{a\}$, so $F_k\subseteq\{a\}$. Under the threat model the true action is
explained by the actual corruption set of size at most $k$, so $F_k\neq\emptyset$ and therefore $F_k=\{a\}$,
which by Theorem~\ref{thm:feasible} permits execution and identifies $a$ as the truth. Conservativeness. The
inclusion is one-directional: a field may carry two per-field feasible values whose extensions to the
remaining fields are jointly infeasible, because a single budget cannot cover dissent that falls in different
domains across fields. The joint rule then certifies where the composition abstains. Since per-field
admission implies joint execution, the joint certifier's abstention set is contained in the composition's.
\end{proof}

\noindent Two consequences are worth stating plainly. The maximal liveness of Theorem~\ref{thm:feasible} is a
property of the unique-feasible-action rule, not of the field-wise implementation, which is safe but may
abstain strictly more often. And the reconciliation and anchor rules are not feasibility votes at all: the
anchor places a field outside the corruption budget, which Theorem~\ref{thm:feasible} shows is the only way a
single-sourced field executes. The implemented system is therefore best read as a sound specialization whose
liveness we measure rather than derive.

\section{Experimental Details}\label{app:exp}

The end-to-end evaluation of Section~\ref{sec:mitl} runs on live models served through OpenRouter. Each
instance is a payment with action-critical fields payee, account, and amount, presented as $N$ evidence
documents from fixed seeds so every model sees an identical corpus. In an attacked instance, one source ($k=1$) is corrupted with an
embedded social-engineering injection, drawn from three attack families: a payee swap, an account
redirection, and an amount change. The model reads each document in a separate, source-isolated call, which
the certifier votes over, and separately reads all documents together as the naive baseline, so a single
tampered source cannot by itself satisfy a multi-source rule. Field values are compared in canonical form,
amounts by numeric value, accounts by alphanumerics, and payees by normalized text, so formatting variation
is not mistaken for disagreement. A certified action is scored unsafe only when it differs from the unique
correct action. We report the agreement rule at $N=3$ and the unique-threshold rule at $N=5>2k$, $80$ episodes each,
roughly $4{,}000$ live model calls in total. Attacked episodes are drawn by a seeded generator at rate $0.6$, so
every model sees the same corpus, with $47$ attacked and $33$ clean at $N=3$ and $54$ and $26$ at $N=5$. The
clean episodes measure false abstention. The head-to-head of Figure~\ref{fig:frontier} is a separate run at
rate $0.55$, giving $44$ attacked and $36$ clean, $13$ of them with a novel payee, and there utility counts correct payments over all $80$
episodes, attacked ones included. The
adaptive-attacker comparison is model-free by construction: it grants the attacker full control of its single
source, laundered copies, refusal-driven revision, and a bounded search over the whole-record manipulation
space, so the result speaks to the certifier logic against a strong model-free adversary rather than to any model's
susceptibility. On reporting, the per-cell rates are point estimates over the episode counts stated in each caption, and
a reported zero is an observed count rather than a proof, bounding the rate below roughly $3/n$ at the stated
episode count $n$, and the pooled zero is descriptive rather than the outcome of a single significance test,
since errors are correlated by template and by model family.

\begin{table}[H]
\centering
\footnotesize
\caption{Mapped issuing domains per designated entity in OpenSanctions ($N=70{,}966$, Section~\ref{sec:scope}).
The control-domain column collapses datasets that share a jurisdiction or body. Both columns are upper-bound
proxies for corruption-distinctness, not operational corruption radii.}
\label{tab:sanctions}
\begin{tabular}{lcc}
\toprule
\rowcolor{ink!8} mapped-domain count $\hat{m}$ & raw dataset & control domain\\
\midrule
$\hat{m}=1$ (single domain)  & 62.5\% & \textbf{74.6\%}\\
$\hat{m}\ge 2$ (two or more) & 37.5\% & \textbf{25.4\%}\\
$\hat{m}\ge 3$ (three or more) & 22.7\% & \textbf{14.9\%}\\
$\hat{m}\ge 4$ (four or more) & 14.5\% & 11.7\%\\
\bottomrule
\end{tabular}
\end{table}

\begin{table}[H]
\centering
\footnotesize
\caption{Mechanism ablation under a single corrupted class (Section~\ref{sec:mech}). Each ablation opens
exactly one attack while the full certifier stays safe. Values are attacker success rates.}
\label{tab:mech}
\begin{tabular}{lcccc}
\toprule
\multicolumn{5}{l}{\textit{Payment-field mechanisms}}\\
\midrule
\rowcolor{ink!8} configuration & one-source & mule acct.\ & amount-splice & omission\\
\midrule
full                 & 0\% & 0\%   & 0\%   & 0\%\\
no join key          & 0\% & 0\%   & \textbf{100\%} & 0\%\\
no account policy    & 0\% & \textbf{100\%} & 0\% & 0\%\\
no mandatory source  & 0\% & 0\%   & 0\%   & \textbf{100\%}\\
\midrule
\multicolumn{5}{l}{\textit{Counting mechanisms (unsafe-certified rate)}}\\
\midrule
\rowcolor{ink!8} configuration & \multicolumn{1}{c}{single corr.} & laundering & \multicolumn{2}{c}{Sybil}\\
\midrule
full                & 0\%  & 0\%   & \multicolumn{2}{c}{0\%}\\
no atomic-claim     & 0\%  & \textbf{100\%} & \multicolumn{2}{c}{0\%}\\
no authentication   & 0\%  & 0\%   & \multicolumn{2}{c}{\textbf{100\%}}\\
\bottomrule
\end{tabular}
\end{table}

\paragraph{The corruption radius for general $k$.}
The single-source studies above exercise the budget $k=1$. To confirm that the certified value is governed by
a genuine corruption radius rather than a single-source special case, we run the identical certifier rule
against a general $k$-domain Byzantine adversary, sweeping the budget $k$ and the number of corruption-distinct
classes $N$. Each cell draws four thousand adversarial configurations, including the worst case in which the
$k$ controlled classes coordinate every vote on one challenger, and reports the worst outcome for the defender.
Table~\ref{tab:byz} shows the result. No configuration with at least one honest class ($N>k$) ever certifies a
wrong value, and the certified-correct boundary falls exactly at $N>2k$, so $k=1$ needs $N\ge3$, $k=2$ needs
$N\ge5$, and $k=3$ needs $N\ge7$, matching the radius $\lfloor (N-1)/2\rfloor$ of Proposition~\ref{prop:exact}.

\begin{table}[H]
\centering
\footnotesize
\caption{Outcome of the certifier rule under a $k$-domain Byzantine adversary
(Proposition~\ref{prop:exact}). \textbf{C} marks a cell certified under every adversarial
configuration drawn, including the coordinated worst case, a dot a cell in which at least one configuration
forces a safe abstention, and a dash a cell with no more classes than the budget ($N\le k$). No cell with
$N>k$ ever certifies a wrong value.}
\label{tab:byz}
\setlength{\tabcolsep}{7pt}
\begin{tabular}{lccccccc}
\toprule
\rowcolor{ink!8} budget & $N{=}2$ & $N{=}3$ & $N{=}4$ & $N{=}5$ & $N{=}6$ & $N{=}7$ & $N{=}8$\\
\midrule
$k=1$ & $\cdot$ & \textbf{C} & \textbf{C} & \textbf{C} & \textbf{C} & \textbf{C} & \textbf{C}\\
$k=2$ & --      & $\cdot$    & $\cdot$    & \textbf{C} & \textbf{C} & \textbf{C} & \textbf{C}\\
$k=3$ & --      & --         & $\cdot$    & $\cdot$    & $\cdot$    & \textbf{C} & \textbf{C}\\
\bottomrule
\end{tabular}
\end{table}

\paragraph{Vote identity must be the control domain, not the root.}
Proposition~\ref{prop:launder} requires that votes be counted per authenticated control domain. Counting per
provenance root is not sufficient, and the gap is exploitable. We run the same certifier rule under three
vote identities, while one compromised domain emits several distinct original records for a wrong value, each
with its own root, plus laundered copies. Table~\ref{tab:voteid} shows the outcome. Counting raw attestations
is unsafe at once. Counting roots defeats copies but not multiple originals: at $N=2$ a single compromised
domain that issues two original records gets the wrong value certified, and at $N\ge3$ the inflated roots push
the true value below threshold and destroy the radius. Counting one vote per authenticated control domain
reproduces Proposition~\ref{prop:exact} exactly, never certifying a wrong value and certifying the true value
whenever $N>2k$, independent of how many originals or copies the adversary manufactures.

\begin{table}[H]
\centering
\footnotesize
\caption{Certified outcome under a single compromised control domain ($k=1$) that emits several distinct
original records for the wrong value plus copies, by vote identity. Only counting per authenticated control
domain preserves the guarantee. ``Unsafe'' means the wrong value was certified.}
\label{tab:voteid}
\setlength{\tabcolsep}{8pt}
\begin{tabular}{ccccc}
\toprule
\rowcolor{ink!8} $N$ & orig & per attestation & per root & per domain (ours)\\
\midrule
2 & 1 & \textcolor{cDanger}{unsafe} & abstain                     & abstain\\
2 & 2 & \textcolor{cDanger}{unsafe} & \textcolor{cDanger}{unsafe} & abstain\\
3 & 1 & abstain                     & \textcolor{cSafe}{certify}  & \textcolor{cSafe}{certify}\\
3 & 2 & abstain                     & abstain                     & \textcolor{cSafe}{certify}\\
\bottomrule
\end{tabular}
\end{table}

\newcommand{\yY}{\textcolor{cSafe}{$\bullet$}}
\newcommand{\pP}{\textcolor{ink!70}{$\circ$}}
\newcommand{\nN}{\textcolor{ink!55}{\textendash}}
\begin{table}[H]
\centering
\footnotesize
\renewcommand{\arraystretch}{0.9}
\caption{Where the certificate sits among neighboring systems (\yY\ present, \pP\ partial, \nN\ absent):
corruption radius, field-typed action, shared-origin modelling, copy-laundering resistance, executing a value
rather than refusing, and real-data coverage, meaning the mechanism is measured on a real deployed corpus rather than on synthetic
evidence alone. The columns are the dimensions this work targets, so the table shows where the certificate
differs from its neighbours rather than ranking them overall.}
\label{tab:compare}
\setlength{\tabcolsep}{8pt}
\begin{tabular}{lcccccc}
\toprule
\rowcolor{ink!8} system & radius & typed & shared-orig. & copy-safe & executes & coverage\\
\midrule
RobustRAG \cite{robustrag}   & \yY & \nN & \nN & \nN & \yY & \nN\\
CaMeL \cite{camel}           & \nN & \pP & \nN & \nN & \nN & \nN\\
FIDES \cite{fides}           & \nN & \pP & \nN & \nN & \nN & \nN\\
PCAS \cite{pcas}             & \nN & \yY & \nN & \nN & \yY & \nN\\
ARGUS \cite{argus}           & \nN & \yY & \nN & \nN & \yY & \nN\\
AuthGraph \cite{authgraph}   & \nN & \yY & \nN & \nN & \yY & \nN\\
MemLineage \cite{memlineage} & \nN & \pP & \nN & \nN & \nN & \nN\\
\rowcolor{cSafe!12} \textbf{this work} & \yY & \yY & \yY & \yY & \yY & \yY\\
\bottomrule
\end{tabular}
\end{table}

\begin{table}[H]
\centering
\footnotesize
\caption{Outcome by attack family under the adaptive one-source attacker (Section~\ref{sec:baseline}). The two
proxy baselines each block one family the other misses and both stay unsafe on the four they share. \textcolor{cDanger}{\textbf{U}}
is an unsafe execution, a dash a safe abstention, \textcolor{cSafe}{\textbf{C}} a correct execution under
attack. The families are a vendor substitution, a fresh mule account, an account carried across vendors, an
inflated amount, a currency swap, and a fully alternate invoice.}
\label{tab:family}
\setlength{\tabcolsep}{5pt}
\begin{tabular}{lcccccc}
\toprule
\rowcolor{ink!8} baseline & vendor & mule & x-vendor & amount & currency & full alt.\\
\midrule
action-gating   & \textcolor{cDanger}{\textbf{U}} & --                              & \textcolor{cDanger}{\textbf{U}} & \textcolor{cDanger}{\textbf{U}} & \textcolor{cDanger}{\textbf{U}} & \textcolor{cDanger}{\textbf{U}}\\
provenance-only & \textcolor{cDanger}{\textbf{U}} & \textcolor{cDanger}{\textbf{U}} & \textcolor{cDanger}{\textbf{U}} & \textcolor{cDanger}{\textbf{U}} & --                              & \textcolor{cDanger}{\textbf{U}}\\
\rowcolor{cSafe!12} \textbf{full certifier} & \textcolor{cSafe}{\textbf{C}} & -- & -- & -- & -- & --\\
\bottomrule
\end{tabular}
\end{table}

\paragraph{Computing the corruption-distinct count.} The minimum hitting set of
Definition~\ref{def:cdcount} is NP-hard in general, but the instances a certifier meets are small and
structured. Our implementation first forces every control domain that is the sole cover of some attestation,
removes the attestations that those already hit, and solves the small remainder exactly, falling back to a greedy
cover only when the residual universe exceeds twenty-two domains. Real dependency sets are short, one or two
domains for a national listing that transposes a United Nations designation, so the forced-singleton pass
usually settles the instance outright. The direction of any approximation matters. Greedy returns an upper
bound on the hitting set, which overstates $m$ and therefore overstates the radius, so a deployment must use
an exact or lower-bound count. The measurements in Section~\ref{sec:scope} are reported as upper bounds
throughout for the same reason.

\paragraph{Obtaining and maintaining dependency sets.} The dependency sets themselves are metadata, and a
deployment has to source them. In the two domains we study they are already published: a sanctions record
names its issuing authority and its legal basis, and a package attestation names the registry, the builder
identity, and the source repository. Mapping those to control domains is a curation task, folding datasets
that share a jurisdiction or body, and it changes on the timescale of institutional structure rather than of
individual records, so an audited quarterly review is enough. Incomplete or wrong information must fail in
the safe direction. An attestation of unknown provenance should be assigned to an unknown domain shared with
every other unknown, which lowers $m$ and tightens the radius, rather than being treated as a fresh
independent witness. Under that convention missing metadata costs liveness and never safety, and an
adversary who hides provenance gains abstention rather than execution.

\paragraph{Per-source injection following against the naive agent.} The two rates in
Table~\ref{tab:models} are measured on different calls, and the gap between them is informative. Injection
following is recorded on the source-isolated read of a corrupted document, and the naive unsafe rate is
recorded on a separate holistic call in which the model reads all $N$ documents together. A model can adopt
the attacker value when it sees the tampered source alone and still reject it in the holistic read, where the
untampered documents contradict it. GPT-5.5 shows this most clearly: it follows the injection on every
attacked episode, yet the holistic agent proposes the attacker value in $85\%$ of them under agreement and
$72\%$ under the threshold rule. The residual is not a parsing artifact, since every holistic response in those
episodes parsed successfully. The certifier does not depend on either rate, because it votes over the
isolated reads and never consumes the holistic proposal.

\paragraph{Extractor independence.} The guarantee assumes an honest class's value is extracted to $f(x)$,
and the extractor is a shared control domain, so we test it. On $150$ episodes at $N=5$ and $k=1$, $96$ attacked and $54$ clean, we read
every source in isolation under six configurations: five single-model ones, which are the deployed case, and
one mixed panel from Table~\ref{tab:models} with a different model per source. The dangerous event is not an
extraction error, since one misread honest source forces abstention in an attacked episode and is tolerated
in a clean one. It is that enough honest sources agree on the \emph{same} wrong value for it, with the
corrupted source, to clear the $N-k$ threshold, which here takes three. We measure the stricter event of any
two agreeing, a conservative proxy, over honest sources alone. Across $654$ honest reads per configuration, four from each attacked episode and five from each clean one,
no honest source yielded the attacker's value, no two agreed on a wrong value, and none produced an unsafe
certified action, bounding that proxy below $2\%$ per configuration. We do not pool across configurations,
since errors are correlated by template and model family, and since no semantic error was observed the data
bound the coincidence rate rather than estimate a correlation.

The study exposed a canonicalization gap. One vendor name carries Cyrillic characters, transliterated to
Latin at rates from $0\%$ for both Gemini models to $7.5\%$ for Opus 4.8. Canonicalization folds case and
whitespace but not script, so a transliterated read compares unequal. It never produced an unsafe action,
the value being the correct payee, but it costs liveness: Opus transliterated consistently and never
abstained, whereas Haiku 4.5 did so inconsistently, disagreed with itself, and abstained on $4\%$. On this
corpus self-consistency mattered more than sharing, and canonicalization should fold scripts as well as case.

\end{document}